\newcommand{\ceil}[1]{\left\lceil #1 \right\rceil}
\newtheorem{theorem}{Theorem}         
\newtheorem{lemma}[theorem]{Lemma}             
\newtheorem{definition}{Definition}
\newtheorem{remark}{Remark}
\begin{document}
	
	\title{Rate-Optimal Streaming Codes  over Three-Node Relay Networks with Burst Erasures}
	
	\author{\IEEEauthorblockN{Zhipeng Li, Wenjie Ma and Zhifang Zhang}
\thanks{The work was supported in part by the National Key R\&D Program of China under Grant 2020YFA0712300, and the CAS Project for Young Scientists in Basic Research under Grant YSBR-008.}
\thanks{The authors are with KLMM, Academy of Mathematics and Systems Science, Chinese Academy of Sciences, Beijing, 100190, China, and also with School of Mathematical Sciences, University of Chinese Academy of Sciences, Beijing, 100049, China (e-mail: lizhipeng23@mails.ucas.ac.cn; mawenjie@amss.ac.cn; zfz@amss.ac.cn).}
    }
	
	\maketitle

\begin{abstract}
	This paper investigates streaming codes over three-node relay networks under burst packet erasures with a delay constraint $T$. In any sliding window of $T+1$ consecutive packets, the source-to-relay and relay-to-destination channels may introduce burst erasures of lengths at most $b_1$ and $b_2$, respectively. Singhvi et al. proposed a construction achieving the optimal code rate when $\max\{b_1,b_2\}\mid (T-b_1-b_2)$. We construct  streaming codes with the optimal rate under the condition
	$T\geq b_1+b_2+\frac{b_1b_2}{|b_1-b_2|}$, thereby enriching the family of rate-optimal streaming codes for three-node relay networks.
\end{abstract}

	\begin{IEEEkeywords}
		Streaming codes, relay network, optimal code rate, low-latency communication. 
	\end{IEEEkeywords}
	
	\newcommand{\qed}{\hfill $\square$}
	\newcommand{\dbox}[1]{%
		\begin{tikzpicture}[baseline=(X.base)]
			\node[
			draw=red,
			dashed,
			line width=0.4pt,
			inner sep=2pt,
			rounded corners=2pt,
			text width=0.95\linewidth,
			align=center,
			font=\scriptsize
			] (X) {#1};
		\end{tikzpicture}%
	}
	\newenvironment{proof}[1][Proof]{
		\par                            
		\normalfont
		\topsep6pt                       
		\trivlist                        
		\item[\hskip1em\hskip\labelsep\textit{#1:}] 
		\ignorespaces                        
	}{%
		\nobreak\hfill$\square$          
		\endtrivlist                    
	}
	
	\section{Introduction}
\IEEEPARstart{T}{he} demand for real-time communication imposes stringent requirements on low latency and highly reliable data transmission in challenging channel environments. For correcting burst erasures under strict decoding delay constraints, Martinian and Sundberg \cite{1337126} introduced streaming codes for point-to-point communication. Later, these codes were augmented with the ability to handle both random and burst erasures \cite{7593328}. Numerous works have explored various aspects of streaming codes, including theoretical bounds and code constructions \cite{8621051,9047134,9578987,8917664}, among which the diagonal embedding of block codes is a common construction approach \cite{7593328}. A recent work \cite{MaISIT25} investigated the design of generator matrices for streaming codes in a convolutional code framework, and proposed binary rate-optimal streaming codes for new parameter regimes. 

While relay nodes are prevalent in multimedia streaming and vehicular networks, streaming codes in three-node relay networks \cite{8835153} have attracted significant attention. The three-node relay network consists of a \emph{source-to-relay} (SR) channel and  a \emph{relay-to-destination} (RD) channel, with a strict delay constraint $T$ from the source to the destination. Singhvi et al. \cite{9834645} considered both random and burst erasures in the SR and RD channels, establishing an upper bound on the achievable code rate. They further developed rate-optimal code constructions for specific parameter regimes, requiring field size linear in $T$. When restricting to burst erasures, i.e., assuming the SR channel (resp. RD channel) incurs a burst erasure of length at most $b_1$ (resp. $b_2$) in any sliding window of $T+1$ consecutive packets, the work in \cite{9834645} provides the rate upper bound 
\begin{equation}\label{eq-bound}
  R\leq \min \biggl\{ \frac{T-b_1}{T-b_1+b_2}, \frac{T-b_2}{T-b_2+b_1} \biggr\} \;,
\end{equation}
and rate-optimal code when  $\max\{b_1,b_2\} \mid (T - b_1 - b_2 )$. Particularly when $b_1 = b_2$, 
binary codes that asymptotically achieve the bound (\ref{eq-bound}) have been proposed in \cite{10764771}. 

\emph{Our contribution. } Along with the generator matrix design developed in \cite{MaISIT25}, we construct binary rate-optimal streaming code over the three-node relay network for $(b_1,b_2)$ burst erasures when $T-b_1-b_2\geq \frac{b_1b_2}{|b_1-b_2|}$. Compared with the results in \cite{9834645}, our work enriches the family of binary rate-optimal three-node streaming code for $(b_1,b_2)$ burst erasures when $b_1\neq b_2$.

\section{Problem Setup}
	\subsection{Notations}
	 For $m,n\in \mathbb{N}$, denote $[m,n] = \{i \in \mathbb{N} : m \leq i \leq n\}$ and $[n] = \{i \in \mathbb{N} : 1 \leq i \leq n\}$. Let $\mathbb{F}$ be a finite field. For a matrix $P \in \mathbb{F}^{k \times n}$, $P(i,j)$ denotes the $(i,j)$-entry of $P$. For $s, t, a, b \in \mathbb{N}_+$ and $S \subseteq \mathbb{Z}$, let $\{P_i : i \in S\}$ be a set of $s \times t$ matrices and $h$ be a map from $[a] \times [b]$ to $S$. Then $\mathbf{P} = (P_{h(i,j)})_{i \in [a],\, j \in [b]}$ is a block matrix with the $(i,j)$-block being $P_{h(i,j)}$. Obviously, $\mathbf{P}$ is an $sa \times tb$ matrix. For integers  $m,n\in\mathbb{N}_+$, $m\mod n$ denotes the remainder of $m$ modulo $n$, which ranges in $[n]$. 
	

	\subsection{Point-to-Point Streaming Code}\label{sec2b}
	
	Let $b,T \in \mathbb{N}_+$. A $(b,T)$ point-to-point streaming code (SC) enables the source messages $\{S[t]\}_{t=0}^{\infty}$ to be transferred through a channel with burst erasure of length at most $b$ during every $T+1$ consecutive packets such that each packet $S[t]$ can be retrieved at time no later than $t+T$. 

To be specific, we describe a $(b,T)$ SC model under the convolutional code framework \cite{MaISIT25} as follows.
At time $t\in\mathbb{N}$, let $X[t]=(S[t],P[t])\in\mathbb{F}^n$ be the encoded packet, where $S[t] \in \mathbb{F}^k$ is the message packet and $P[t]\in\mathbb{F}^{n-k}$ is the parity packet. There exist encoding matrices $G_i\in\mathbb{F}^{k\times n}$, $i\in \mathbb{Z}$, such that 
\begin{align}
			&\quad (X[0], X[1], ..., X[t])\notag\\&=(S[0], S[1], ..., S[t]))\begin{pNiceMatrix}
        G_0 & G_1 & \cdots & G_t\\
         & G_0 & \cdots & G_{t-1}\\
       \Block{2-2}{\mathbf{0}}  &  & \ddots & \vdots\\& & &G_0
    \end{pNiceMatrix}\label{eq-1}\\
			&=(S[0], S[1], ..., S[t])( G_{j-i} )_{i,j\in [t+1]} \;.\notag
\end{align}
Since the code is systematic, it can be seen that
\begin{equation*}
    G_i=\left\{
    \begin{aligned}
      &(I_k\mid P_0),& &i=0\\
      &(\mathbf{0}_{k\times k}\mid P_i), &&\text{otherwise}
    \end{aligned}
    \right.
\end{equation*}
where $I_k$ denotes the $k\times k$ identity matrix,  \(P_i\in \mathbb F^{k\times (n-k)}\) for any \( i\in \mathbb Z\), and $P_i=\mathbf{0}$ when $i\notin [0,T]$. 
As a result, it has
\begin{equation}\label{eq1}
		P[t]=\sum _{i=0}^TS[t-i]P_i\;.
\end{equation}
Moreover, the streaming code is said to have rate $\frac{k}{n}$.

The encoded packet $X[t]$ is transmitted through the channel where erasures may happen. Fortunately, even $X[t]$ is erased, the information of $S[t]$ is still recoverable from the packets received afterwards due to (\ref{eq1}). However, a delay is thus introduced. To analyze the delay of $S[t]$, it suffices to investigate the worst case, i.e., $X[t],...,X[t+b-1]$ are erased. Then the next erasure will not happen until time $t+b+T$, which means the packets $X[t+b],...,X[t+b+T-1]$ are all successfully received. Then according to (\ref{eq1}), it has the following linear system of the erased packets:
\begin{equation}\label{recovery}
		\begin{split}
			(\bar P[t+b], \bar P[t+b+1], \ldots, \bar P[t+b+T-1]) \\
			= \left( S[t], S[t+1], \ldots, S[t+b-1] \right)\mathbf{ P},
		\end{split}
	\end{equation}
where  
	\begin{equation}\label{matrixP}
		\resizebox{0.9\linewidth}{!}{$
			\mathbf{P} = 
			\begin{pNiceMatrix}
				P_{b} & P_{b+1} & \ldots & P_{T} & 0& \cdots & 0 \\
				P_{b-1} & P_{b} & \cdots & P_{T-1} & P_{T} & \cdots &0    \\
				\vdots & \vdots & \vdots & \vdots & \vdots & \ddots & \vdots  \\
				P_1 & P_2 & \cdots & P_{T-b+1} & P_{T-b+2} & \cdots & P_T
			\end{pNiceMatrix}
			$}
	\end{equation}
and $\bar P[i]$ denotes the reduced parity packet by removing from $P[i]$ all information of the unerased packets.

Suppose $S[t]=(S_1[t],...,S_k[t])\in\mathbb{F}^k$. The $k$ coordinates of $S[t]$ may have different delay. More specifically, we define the delay profile of a $(b,T)$ SC as below.
	
\begin{definition}\label{d1}
The {\it delay profile} of a $(b,T)$ SC is a sequence $(t_1, \ldots, t_k) \in \mathbb{N}^k$ such that for any $t\in\mathbb{N}$ and $i \in [k]$, the symbol $S_i[t]$ can always be recovered no later than time $t + t_i$. It obviously has $t_i\leq T$ for all $i\in [k]$.
\end{definition}

\subsection{Three-Node Burst Erasures Correcting Streaming Codes Model}
	A $(b_1,b_2,T)$ \emph{three-node  relay network} consists of a \emph{source-to-relay} (SR) channel and  a \emph{relay-to-destination} (RD) channel, 
where the SR (resp. RD) channel permits erasure burst of length at most $b_1$ (resp. $b_2$) during every $T+1$ consecutive packets. The delay from the source to the destination must not exceed $T$. It trivially holds $T\geq b_1+b_2$.

We perform a $(b_1,b_2,T)$ three-node burst erasure-correcting
streaming code (TBSC) in three phases:
\begin{itemize}
  \item {\it The SR transmission}, realized by a $(b_1,T-b_2)$ SC encoding from $\{S[t]\}_{t=0}^{\infty}$ to $\{X[t]\}_{t=0}^{\infty}$ as in (\ref{eq-1});
  \item {\it The relay map}, which maps the received packets $Y[0],Y[1],...,Y[t]$ to a packet $R[t]\in\mathbb{F}^k$, where
  \begin{equation*}
Y[t]=\begin{cases}
\perp, & {\rm if~}X[t]{\rm ~is~erased}\\
X[t], & {\rm otherwise}
\end{cases}\;,
\end{equation*} and $R[0],R[1],...,R[t]$ are the inputs to the RD encoding;
  \item {\it The RD transmission}, realized by a $(b_2,T-b_1)$ SC encoding from $\{R[t]\}_{t=0}^{\infty}$ to $\{Z[t]\}_{t=0}^{\infty}$ as in (\ref{eq-1}).
\end{itemize}

Since the SR code and RD code both has rate $\frac{k}{n}$, the resulted TBSC in the above setup also has rate $\frac{k}{n}$. When the rate matches the bound (\ref{eq-bound}), the TBSC is called rate-optimal.

	\section{Binary Rate-Optimal Streaming Codes}
	Denote $T-b_1=pb_2+q$ where $0< q\leq b_2$. We first construct $(b_1,b_2,T)$ TBSCs for $b_1 < b_2$, achieving the optimal rate $R = \frac{T-b_1}{T-b_1+b_2}$ under the constraint $\frac{T-b_2}{b_1} \geq p + 1=\ceil{\frac{T-b_1}{b_2}}$. 
	
	\subsection{Construction for SR Code}

The SR code is given by a binary $(b_1,T-b_2)$ SC with rate $R = \frac{T-b_1}{T-b_1+b_2}$. According to Section \ref{sec2b},  we describe the code by defining the $(T-b_1)\times b_2$ binary matrix $P_i$ for $i\in[0,T-b_2]$. 
\begin{itemize}
  \item[(i)] For $i=jb_1$, $j\in[p]$, set \[
	P_{j  b_1} = 
	\begin{pmatrix}
		\mathbf{0}_{(j-1) b_2} \\
		I_{b_2} \\                 
		\mathbf{0}  
	\end{pmatrix}\footnote{Hereafter, the subscript of the zero matrix $\mathbf{0}$ indicates the row size, since its column size is clear from context. We even totally omit the subscripts when the matrix size is clear from context.}.
	\]

\item[(ii)] For $i=(p+1)b_1$, set
\[
	P_{(p+1) b_1} =
	\begin{pmatrix}
		\multicolumn{2}{@{}c@{\hspace{2\arraycolsep}}}{~~~\mathbf{0}_{pb_2}} \\
		I_{q} & \mathbf{0}
	\end{pmatrix}.
	\]
Note due to the constraint $\frac{T-b_2}{b_1}\geq p+1$, it follows that $T-b_2\geq (p+1)b_1$.
\item[(iii)] For other cases, set $P_i=\mathbf{0}$.
\end{itemize}

\begin{figure*}
		\begin{equation}\label{eqSRP}
			\mathbf{P}_{\mbox{\tiny SR}} = 
      \scalebox{0.85}{$\displaystyle 
        \begin{pNiceMatrix}
          \Block[fill=red!15,rounded-corners]{}{P_{b_1}} & P_{b_1+1} & \ldots & P_{2b_1-1} &\Block[fill=red!15,rounded-corners]{}{P_{2b_1}} & P_{2b_1+1} & \ldots & P_{3b_1-1} &~\cdots~~ & \Block[fill=red!15,rounded-corners]{}{P_{(p+1)b_1}} & 0 &\cdots & 0 \\
          P_{b_1-1} & \Block[fill=red!15,rounded-corners]{}{P_{b_1}} & \ldots & P_{2b_1-2} &P_{2b_1-1} & \Block[fill=red!15,rounded-corners]{}{P_{2b_1}} & \ldots & P_{3b_1-2} &\cdots & P_{(p+1)b_1-1} & \Block[fill=red!15,rounded-corners]{}{P_{(p+1)b_1}} &\cdots & 0 \\
          \vdots & \vdots & \ddots & \vdots &\vdots & \vdots & \ddots & \vdots & \vdots & \vdots & \vdots & \ddots & \vdots \\
          P_1 & \ldots & \ldots & \Block[fill=red!15,rounded-corners]{}{P_{b_1}} &P_{b_1+1} & \ldots & \ldots & \Block[fill=red!15,rounded-corners]{}{P_{2b_1}} &\cdots & P_{pb_1+1} & P_{(p+1)b_1+2} & \ldots & \Block[fill=red!15,rounded-corners]{}{P_{(p+1)b_1}}
        \end{pNiceMatrix}
      $}
      \end{equation}
      \end{figure*}
		
	\begin{lemma}\label{SR}
		The SR code defined in (i)-(iii) satisfies the following delay profile:
\begin{IEEEeqnarray*}{rCl}
    \IEEEeqnarraymulticol{3}{l}{
        \begin{array}{c}
            \Bigl( 
            \underbrace{b_1,\ldots,b_1}_{b_2},
            \underbrace{2b_1,\ldots,2b_1}_{b_2},\,
            \ldots,\,
            \underbrace{p b_1,\ldots,p b_1}_{b_2}, \\  
            \underbrace{(p + 1) b_1,\ldots,(p + 1) b_1}_{q} 
            \Bigr).
        \end{array}
    }
    \IEEEeqnarraynumspace
\end{IEEEeqnarray*}
	\end{lemma}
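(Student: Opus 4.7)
The plan is to analyze the worst-case single burst and to exploit the sparsity of the non-zero $P_i$'s, which occur only at indices $\{b_1, 2b_1, \ldots, (p+1)b_1\}$ and whose non-zero rows are pairwise disjoint. Fix a burst that erases $X[t], \ldots, X[t+b_1-1]$; by the $(b_1, T-b_2)$-burst hypothesis for the SR channel, the packets $X[t+b_1], \ldots, X[t+b_1+T-b_2-1]$ are all received, and the standing hypothesis $(p+1)b_1 \leq T-b_2$ guarantees that $P[t+u+jb_1]$ lies in this window for every $u\in[0,b_1-1]$ and $j\in[p+1]$. I also plan to proceed by induction on bursts, so that $S[t']$ for $t'<t$ may be treated as already decoded; this is consistent since the maximum claimed delay $(p+1)b_1$ does not exceed $T-b_2$.

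The main computation is to expand, using (\ref{eq1}) together with $P_i=0$ outside $\{lb_1 : l\in[p+1]\}$,
\begin{equation*}
P[t+u+jb_1] \;=\; \sum_{l=1}^{p+1} S[t+u+(j-l)b_1]\,P_{lb_1}.
\end{equation*}
The $l=j$ term equals $S[t+u]\,P_{jb_1}$; for $l<j$ the index $t+u+(j-l)b_1 \geq t+b_1$ corresponds to a non-erased, hence already received packet, and for $l>j$ the index is at most $t+u-b_1 \leq t-1$ and is known by the induction hypothesis. Subtracting these known contributions gives $\bar P[t+u+jb_1] = S[t+u]\,P_{jb_1}$. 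By construction, this identity directly reveals the coordinates $S_{(j-1)b_2+1}[t+u],\ldots,S_{jb_2}[t+u]$ when $j\in[p]$, and the last $q$ coordinates $S_{pb_2+1}[t+u],\ldots,S_{pb_2+q}[t+u]$ when $j=p+1$. Since $pb_2+q = T-b_1 = k$, every coordinate of $S[t+u]$ is covered with the delay $jb_1$ (resp.\ $(p+1)b_1$) claimed by the profile.

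The argument is essentially a bookkeeping exercise once one observes the sparse, disjoint-row block pattern of the non-zero $P_{jb_1}$'s. I expect the only subtlety to be the verification that each parity packet $P[t+u+jb_1]$ with $u\leq b_1-1$ and $j\leq p+1$ is received within the no-erasure window of length $T-b_2$, which reduces cleanly to the lemma's standing hypothesis $\tfrac{T-b_2}{b_1}\geq p+1$, and cleanly isolating the induction step so that contributions from past bursts do not pollute the reduced parity $\bar P[t+u+jb_1]$.
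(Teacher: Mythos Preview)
Your proof is correct and follows essentially the same approach as the paper's: both exploit that the only nonzero $P_i$'s are $P_{b_1},\dots,P_{(p+1)b_1}$ with pairwise disjoint nonzero rows, so that the reduced parity $\bar P[t+u+jb_1]$ isolates $S[t+u]P_{jb_1}$ and directly yields the $j$-th block of coordinates with delay $jb_1$. The only presentational difference is that you expand $P[t+u+jb_1]$ term by term and spell out the induction on bursts, whereas the paper reads the same conclusion off the block matrix $\mathbf{P}_{\mbox{\tiny SR}}$ in \eqref{eqSRP} and leaves the handling of packets with index $<t$ implicit in the definition of $\bar P$.
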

	
	\begin{proof}
It suffices to analyze the delay of $S_j[i]$ for $i\in[t, t+b_1-1]$ and $j\in[T-b_1]$ from the linear system (\ref{recovery}). According to (i)-(iii), the matrix $\mathbf{P}_{\mbox{\tiny SR}}$, which is comprised of the first $(p+1)b_1b_2$ columns of $\mathbf{P}$ in \eqref{recovery}, is displayed in (\ref{eqSRP}). Note only the blocks highlighted in red contain nonzero entries, while the remaining blocks are all zero blocks. Then, due to the structure of $P_{b_1}$ defined in (i), one can see the first $b_2$ symbols of $S[i]$, $i\in[t, t+b_1-1]$, can be recovered with delay $b_1$. Similarly, the structure of $P_{2b_1}$ indicates the next $b_2$ symbols of $S[i]$, $i\in[t, t+b_1-1]$, can be recovered with delay $2b_1$, and so on. The lemma is proved.
\end{proof}
	
\subsection{Construction for RD Code}
\begin{figure*}[ht]
		\begin{equation}\label{eqRDP}
			\mathbf{P}_{\mbox{\tiny RD}} = 
      \scalebox{0.85}{$\displaystyle 
        \begin{pNiceArray}{cccc|ccccc|c|cc|c}
          \Block[fill=red!15,rounded-corners]{}{P'_{b_2}} & P'_{b_2+1}&\ldots&P'_{b_2+q-1}&\Block[fill=red!15,rounded-corners]{}{P'_{b_2+q}}&\ldots&P'_{2b_2-1}&\ldots
          &P'_{2b_2+q-1}&\Block{7-1}{\mathbf{P}'_2}&\Block{7-2}{\ldots}&&\Block{7-1}{\mathbf{P}'_p}\\
          \Block[fill=red!15,rounded-corners]{}{P'_{b_2-1}}&\Block[fill=red!15,rounded-corners]{}{P'_{b_2}}&&
          \Block{2-1}{\vdots}&\Block{2-1}{\vdots}&\Block{2-1}{\ddots} &&&\Block{2-1}{\vdots}&&&&
          \\\Block{2-1}{\vdots}&\Block{2-1}{\vdots}&\ddots&&&&&&&&&&\\
          && &\Block[fill=red!15,rounded-corners]{}{P'_{b_2}}&P'_{b_2+1}&&\Block{2-2}{\ddots}&&\Block{2-1}{\vdots}&&&&\\
          \Block{2-1}{\vdots}&\Block{2-1}{\vdots}&&\Block{2-1}{\vdots}&\Block[fill=red!15,rounded-corners]{}{P'_{b_2}}&&&
          &&&&\\
          &&&& &\ddots&&&&&&&\\
          \Block[fill=red!15,rounded-corners]{}{P'_1}& \Block[fill=red!15,rounded-corners]{}{P'_2}&\ldots&\ldots&
          \Block[fill=red!15,rounded-corners]{}{P'_{q+1}}&\ldots&\Block[fill=red!15,rounded-corners]{}{P'_{b_2}}&\ldots
          &\Block[fill=red!15,rounded-corners]{}{P'_{b_2+q}}&&&
        \end{pNiceArray}
      $}
      \end{equation}
      \end{figure*}

The RD code is given by a binary $(b_2,T-b_1)$ SC with rate $R = \frac{T-b_1}{T-b_1+b_2}$. Again,  we describe the code by defining the $(T-b_1)\times b_2$ binary matrix $P'_i$ for $i\in[0,T-b_1]$. 
\begin{itemize}
  \item[(I)] For $i=jb_2+q$, $j\in[p]$, set
  \[
	P'_{j  b_2 + q} = 
	\begin{pmatrix}
		\mathbf{0}_{( j - 1) b_2 + q }  \\
		I_{b_2} \\                
		\mathbf{0}_{(p-j) b_2 }
	\end{pmatrix}.
	\]
\item[(II)] For $i=b_2$, set
\[
	P'_{b_2} =
	\begin{pmatrix}
		I_{q} & \mathbf{0} \\
		\multicolumn{2}{@{}c@{\hspace{2\arraycolsep}}}{\mathbf{~~~~0}_{p b_2 }}
	\end{pmatrix}.
	\]
\item[(III)] For $i\in[b_2-1]$, if $q\neq b_2$, let $d_i=i\mod q$ and $e_i=q+i\mod (b_2-q)$, then set $P'_i(d_i,e_i)=1$, while the remaining entries of $P'_i$ are all zeros. If $q=b_2$,  set  $P'_i=\mathbf{0}$ for $i\in[b_2-1]$.
\item[(IV)] For remaining $i$'s, set $P'_i=\mathbf{0}$.
\end{itemize}

Then the matrix $\mathbf{P}$ in (\ref{recovery}), here denoted as $\mathbf{P}_{\mbox{\tiny RD}}$, is displayed in (\ref{eqRDP}),
where for $2\leq i\leq p$,
\begin{equation}
\mathbf{P}'_i= \scalebox{0.85}{$\displaystyle\begin{pNiceMatrix}
          \Block[fill=red!15,rounded-corners]{}{P'_{ib_2+q}} & P'_{ib_2+q+1} & \ldots & P'_{(i+1)b_2+q-1} \\
          P'_{ib_2+q-1} & \Block[fill=red!15,rounded-corners]{}{P'_{ib_2+q}} & \ldots & P'_{(i+1)b_2+q-2} \\
          \vdots & \vdots & \ddots & \vdots \\
          P'_{(i-1)b_2+q+1} & \ldots & \ldots & \Block[fill=red!15,rounded-corners]{}{P'_{ib_2+q}} 
        \end{pNiceMatrix}$}\;.
\end{equation}

Before analyzing the delay profile, we first investigate the left-most sub-matrix (denoted as $\mathbf{P}'_{0}$) of $\mathbf{P}_{\mbox{\tiny RD}}$ separated by the first vertical line, i.e.,
\begin{equation}\label{eq9}
\mathbf{P}'_{0} = \begin{pmatrix}
		P'_{b_2} & \cdots & P'_{b_2+q-1} \\
		\vdots & \vdots & \vdots \\
		P'_{1} & \cdots & P'_{q}
	\end{pmatrix}.
\end{equation}
Note the nonzero blocks in $\mathbf{P}'_{0}$ are $P'_1,...,P'_{b_2}$, each of which contains all-zero rows except for the top $q$ rows. Furthermore, the following lemma holds.
\begin{lemma}  \label{invertible}
By deleting the bottom $T-b_1-q$ rows from each block $P'_i$ in $\mathbf{P}'_{0}$, the resulting matrix becomes an invertible $b_2q\times b_2q$  binary matrix.
\end{lemma}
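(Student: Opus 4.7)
The plan is to show that, under a natural partition of its rows and columns, the reduced matrix $\widetilde{\mathbf{P}}'_0$ (of size $qb_2 \times qb_2$) takes the block upper triangular form
\[
\widetilde{\mathbf{P}}'_0 \;=\; \begin{pmatrix} I_{q^2} & B \\ 0 & D \end{pmatrix},
\]
with $D$ a permutation matrix; invertibility will then follow at once from $\det(\widetilde{\mathbf{P}}'_0) = \det(I_{q^2}) \det(D) = \pm 1$.

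First, I would split the columns of $\widetilde{\mathbf{P}}'_0$ into ``left'' (the first $q$ columns in each of the $q$ column-blocks) and ``right'' (the last $b_2 - q$ columns), and split the row-blocks into ``top'' (row-blocks $1, \ldots, q$) and ``bottom'' (row-blocks $q+1, \ldots, b_2$). Reading off (\ref{eq9}), the $(i', j)$-block of $\mathbf{P}'_0$ equals $P'_{b_2 - i' + j}$. By (II), the only blocks supplying entries in left columns are copies of $\widetilde{P}'_{b_2} = (I_q \mid \mathbf{0})$, which occur precisely at positions $(j, j)$ for $j \in [q]$; these sit entirely in top row-blocks and contribute $I_{q^2}$ to the top-left submatrix, and nothing to the bottom-left submatrix. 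Conversely, every nonzero block $\widetilde{P}'_i$ for $i \in [1, b_2 - 1]$ from (III) carries its single $1$ at $(d_i, e_i)$ with $e_i \in [q+1, b_2]$, i.e.\ in right columns, so those blocks contribute nothing to the left partition at all. This forces $A = I_{q^2}$ and $C = 0$.

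Next, I would show $D$ is a permutation matrix (the case $q = b_2$ is trivial, since $D$ is then empty). Setting $a = b_2 - q \geq 1$, the submatrix $D$ has an $a \times q$ block structure, each block being $q \times a$; for $(i'', j) \in [a] \times [q]$, the $(i'', j)$-block equals the right-column part of $\widetilde{P}'_i$ with $i = a - i'' + j \in [1, b_2 - 1]$, placing a single $1$ at local position $(d_i,\, e_i - q)$. So $D$ has exactly $aq$ nonzero entries, and I must check that these lie in pairwise distinct global rows and pairwise distinct global columns. The verification uses $d_i = i \bmod q$ and $e_i - q = i \bmod a$: two $1$'s sharing a global row force $i''_1 = i''_2$ and $d_{i_1} = d_{i_2}$, so $j_1 - j_2 \equiv 0 \pmod q$, forcing $j_1 = j_2$ since $j_1, j_2 \in [q]$; the dual argument, working modulo $a$, settles the columns.

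The hard part will be this final index bookkeeping --- cleanly linking the offset $i = a - i'' + j$ to the two modular definitions of $d_i$ and $e_i - q$ --- since everything else reduces to reading off the constructions (I)--(III) and identifying which positions are supported. Once this is done, $D$ is a permutation matrix, and the block upper triangular factorization closes the lemma.
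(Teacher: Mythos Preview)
Your proposal is correct and follows essentially the same route as the paper: after a column permutation separating the first $q$ columns of each block from the remaining $b_2-q$, both you and the paper obtain an $I_{q^2}$ block in the top-left, zeros in the bottom-left, and the same $(b_2-q)q\times(b_2-q)q$ bottom-right block $D=\tilde{\mathbf P}$, which is then shown to be a permutation matrix. The only cosmetic differences are that the paper further clears $B$ by column operations to reach block-diagonal form (you stop at block upper-triangular, which suffices), and that you spell out the modular bookkeeping for why $D$ is a permutation matrix, whereas the paper simply asserts this ``from (III)''.
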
  
	
	\begin{proof}  
We only need to prove the resulting matrix is invertible. For simplicity, we still use the notation $P'_i$ to denote the reduced block after removing the bottom $T-b_1-q$ rows from $P'_i$. First, according to (II) and (III), one can see that
$$P'_{b_2}=(I_q~\mathbf{0}),\;\;\;\;P'_i=(\mathbf{0}~\tilde{P}_i), \;1\leq i<b_2,$$
where $\tilde{P}_i$ denotes the right $b_2-q$ columns. Put the first $q$ columns of all blocks together and perform invertible column transformations, then the resulting $b_2q\times b_2q$ matrix becomes
$$\begin{pmatrix}
  I_{q^2}&\mathbf{0}\\\mathbf{0}&\tilde{\mathbf{P}}
\end{pmatrix}, \mbox{~~~where~}
\tilde{\mathbf{P}}=\begin{pmatrix}
		\tilde{P}_{b_2-q} & \cdots & \tilde{P}_{b_2-1} \\
		\vdots & \vdots & \vdots \\
		\tilde{P}_{1} & \cdots & \tilde{P}_{q}
	\end{pmatrix}.$$
Note that $\tilde{P}_i$, $1\leq i<b_2$, has only one entry being $1$, and zeros elsewhere. Moreover, from (III) one can see that $\tilde{\mathbf{P}}$ is actually a permutation matrix (i.e., each row and each column has exactly one $1$) and therefore is invertible. The lemma follows immediately.
\end{proof}
	
\begin{lemma}\label{RD}
		The RD code defined in (I)-(IV) satisfies the following  delay profile:
\begin{IEEEeqnarray*}{rCl}
	\IEEEeqnarraymulticol{3}{l}{
		\begin{array}{c}
			\Bigl(
			\underbrace{b_2,\ldots,b_2}_{q},\,
			\underbrace{b_2\!+\!q,\ldots,b_2\!+\!q}_{b_2},\, \underbrace{2b_2\!+\!q,\ldots,2b_2\!+\!q}_{b_2},\\  
			\ldots,\,
			\underbrace{p b_2+q,\ldots,p b_2+q}_{b_2} 
			\Bigr)
		\end{array}
	}\IEEEeqnarraynumspace 
\end{IEEEeqnarray*}
	\end{lemma}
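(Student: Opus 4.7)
The plan is to decode the erased symbols layer by layer, where layer $0$ denotes the first $q$ coordinates of $S[\cdot]$ (with claimed delay $b_2$) and layer $i \in [p]$ denotes the next $b_2$ coordinates (with claimed delay $ib_2+q$). Fix a burst $[t_0, t_0+b_2-1]$ in the RD channel, so that $S[t_0], \ldots, S[t_0+b_2-1]$ are to be recovered from the parity packets $\bar P'[t_0+b_2], \bar P'[t_0+b_2+1], \ldots$. From the construction (I)--(IV), each $P'_{jb_2+q}$ (case I) has all its nonzero rows inside layer $j$, the matrix $P'_{b_2}$ (case II) has its nonzero rows inside layer $0$, every case III block $P'_i$ with $i \in [b_2-1]$ has its single $1$ in a layer-$0$ row (since $d_i \in [q]$), and all other $P'_i$ vanish. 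This suggests splitting each parity packet according to the layer supplying its content.

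For layer $0$ I would use two sub-cases. For $m \in [0, q-1]$, I expand $\bar P'[t_0+b_2+m] = \sum_{i=m+1}^{b_2+m} S[t_0+b_2+m-i] P'_i$ and note the range misses the earliest case-I index $b_2+q$, leaving only $P'_{b_2}$ and case-III matrices. Case III contributes solely in columns $e_i>q$, while $P'_{b_2}$ writes $(S_1[t_0+m],\ldots,S_q[t_0+m])$ into the first $q$ entries, so layer $0$ of $S[t_0+m]$ is read off directly at time $t_0+b_2+m$, meeting the deadline. For $m \in [q, b_2-1]$, I would instead consider the full group-$0$ system $(\bar P'[t_0+b_2],\ldots,\bar P'[t_0+b_2+q-1]) = (S[t_0],\ldots,S[t_0+b_2-1])\,\mathbf{P}'_0$. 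By the same exclusion of case I, this system still involves only layer-$0$ unknowns, and Lemma~\ref{invertible} supplies invertibility of the induced $b_2q \times b_2q$ coefficient matrix. Thus all layer-$0$ symbols are recovered by time $t_0+b_2+q-1$, which is no later than the deadline $t_0+b_2+m$ for every $m \geq q$.

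For layer $i \geq 1$ I would analyze $\bar P'[t_0+ib_2+q+m]$ for $m \in [0, b_2-1]$, whose relevant $P'_{i'}$-index range $[(i-1)b_2+q+m+1,\, ib_2+q+m]$ contains exactly one case-I index, namely $ib_2+q$, and the corresponding block $P'_{ib_2+q}$ maps $S[t_0+m]$ exactly onto its layer-$i$ part. For $i \geq 2$ the lower end of the range already exceeds $b_2-1$, so $P'_{b_2}$ and all case-III matrices drop out and the packet directly equals layer $i$ of $S[t_0+m]$ at exactly the deadline $t_0+ib_2+q+m$. For $i=1$ the extra contributions are layer-$0$ terms which, by the previous step, are known by $t_0+b_2+q-1 \leq t_0+b_2+q+m$, so subtracting them recovers layer $1$ of $S[t_0+m]$ at its deadline.

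The main obstacle is the bookkeeping in the layer-$0$ analysis: one must carefully distinguish the direct-readability regime $m<q$ (single-packet recovery) from the system-inversion regime $m\geq q$ (batch recovery via Lemma~\ref{invertible}), and verify that the group-$0$ packets carry no contribution from any layer-$\geq 1$ unknown, which is what allows Lemma~\ref{invertible} to apply in its stated form. Once layer $0$ is isolated in this way, layers $i\geq 2$ decode from one packet each through $P'_{ib_2+q}$, and layer $1$ follows by a single substitution.
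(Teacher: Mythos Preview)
Your proposal is correct and follows essentially the same approach as the paper's proof: both decode layer~$0$ first by splitting into the direct-readout regime (your $m\in[0,q-1]$, the paper's $i\in[t,t+q-1]$) and the batch-inversion regime via Lemma~\ref{invertible}, then handle layers~$i\ge 2$ through the single case-I block $P'_{ib_2+q}$, and finally layer~$1$ by subtracting already-known layer-$0$ contributions. The only difference is presentational---you do index-range bookkeeping on individual parity packets while the paper works with the block submatrices $\mathbf{P}'_0,\mathbf{P}'_1,\ldots,\mathbf{P}'_p$ of $\mathbf{P}_{\mbox{\tiny RD}}$---and your argument also implicitly covers the $q=b_2$ case (where case~III vanishes and the $m\ge q$ sub-case is empty) that the paper dispatches separately.
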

	\begin{proof}
Similar to Lemma \ref{SR}, it suffices to analyze the delay of $S_j[i]$ for $i\in[t, t+b_2-1]$ and $j\in[T-b_1]$ from the matrix $\mathbf{P}_{\mbox{\tiny RD}}$ displayed in (\ref{eqRDP}). Note that when $q=b_2$, $\mathbf{P}_{\mbox{\tiny RD}}$ is of the same form as $\mathbf{P}_{\mbox{\tiny SR}}$. So we only consider the case $q\neq b_2$, and it is accomplished in three cases below according to the value of $j$.
\begin{enumerate}
  \item $1\leq j \leq q$. 
  
  The recovery of the first $q$ symbols for all $S[i]$, $i\in[t, t+b_2-1]$ , depends on the matrix $\mathbf{P}'_0$ in (\ref{eq9}). Note the nonzero rows of $\mathbf{P}'_0$ correspond exactly to these symbols. Due to Lemma \ref{invertible}, these symbols are uniquely determined at receiving the packets $P[t+b_2], P[t+b_2+1],...,P[t+b_2+q-1]$. That is, all these symbols can be recovered by time $t+b_2+q-1$. Particularly, for $i\in[t+q,t+b_2-1]$, the recovery delay of $S_j[i]$ is thus within $b_2$. For $i\in[t, t+q-1]$, the column block of $\mathbf{P}'_0$ that contains $P'_{b_2}$ in the $(i-t+1)$-th row block contains the identity $I_q$ which corresponds exactly to the symbols $S_j[i]$, $1\leq j\leq q$, and zeros elsewhere. This implies a direct recovery of $S_j[i]$ at time $t+b_2-1+(i-t+1)=i+b_2$, so the delay is $b_2$. Therefore, for all $i\in[t, t+b_2-1]$, the first $q$ symbols of $S[i]$ can be recovered with delay $b_2$.
      
  \item $q<j< b_2+q$.
  
  Consider the sub-matrix between $\mathbf{P}'_0$ and $\mathbf{P}'_2$ in the matrix $\mathbf{P}_{\mbox{\tiny RD}}$. For simplicity, we denote this sub-matrix as $\mathbf{P}'_{1}$. It can be seen that $\mathbf{P}'_{1}$ is a $b_2\times b_2$ lower triangular matrix in the block sense with $P'_{b_2+q}$'s in the diagonal line. Moreover, these $P'_{b_2+q}$'s contain the identity matrices $I_q$ which correspond exactly to the symbols $S_j[i]$ for $q<j\leq b_2+q$ and all $i\in[t, t+b_2-1]$. Although there are nonzero entries below the $P'_{b_2+q}$'s, they all correspond to the first $q$ symbols of $S[i]$ which have been recovered before time $t+b_2+q$ in case 1). Therefore, for $q<j\leq b_2+q$, $S_j[i]$ can be recovered at time $i+b_2+q$.
  
  \item $b_2+q\leq j\leq T-b_1=pb_2+q$.
  
  Partition the interval $[b_2+q, pb_2+q]$ into $\bigcup_{l=2}^p[(l\!-\!1)b_2\!+\!q,lb_2\!+\!q\!-\!1]$.
  Note the sub-matrices $\mathbf{P}'_2,...,\mathbf{P}'_p$ are all diagonal matrices in the block sense. 
  It is easy to see that for $2\leq l\leq p$, $\mathbf{P}'_l$ indicates the symbols $S_j[i]$ for $j\in[(l\!-\!1)b_2\!+\!q,lb_2\!+\!q\!-\!1]$ can be recovered at time $i+lb_2+q$.
\end{enumerate}
\end{proof}

\subsection{The relay map and the TBSC}
Next, we define a relay map which maps the received packets of the SR code to the input packets of the RD code. The relay map is designed as a decode-and-forward map depending on the delay profile of the SR code and RD code, ensuring the delay from the source to the destination is within $T$.

Suppose the input packet of the RD code at time $t$ is $R[t]=(R_1[t],...,R_k[t])\in\mathbb{F}^k$, where $k=T-b_1=pb_2+q$ in our construction. For $i\in[k]$, define
\begin{equation}\label{eq10}
R_i[t]=S_{k\!+\!1\!-\!i}[t-l_ib_1],
\end{equation}
where $l_i=\ceil{\frac{k+1-i}{b_2}}\in[p+1]$.

\begin{theorem}\label{thm4}
The relay map in (\ref{eq10}) along with the SR code and RD code respectively described in Section III-A and III-B gives a $(b_1,b_2,T)$ TBSC with the rate $\frac{T-b_1}{T-b_1+b_2}$ which is optimal according to the upper bound (\ref{eq-bound}) when $b_1<b_2$. 
\end{theorem}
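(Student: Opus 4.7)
The plan is to establish two properties: that the code rate matches the upper bound (\ref{eq-bound}) and that the source-to-destination delay of every message symbol is at most $T$.

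For the rate, since both the SR and RD codes have parameters $k=T-b_1$ and $n=T-b_1+b_2$, the composite TBSC inherits rate $\frac{T-b_1}{T-b_1+b_2}$. To confirm this matches (\ref{eq-bound}) when $b_1<b_2$, I would cross-multiply the inequality $\frac{T-b_1}{T-b_1+b_2}\leq\frac{T-b_2}{T-b_2+b_1}$, which simplifies to $T(b_1-b_2)\leq (b_1-b_2)(b_1+b_2)$; dividing by the negative factor $b_1-b_2$ flips the inequality to $T\geq b_1+b_2$, which always holds in the setup.

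The core of the argument is the delay analysis. For each coordinate $i\in[k]$, I would decompose the end-to-end delay of $S_{k+1-i}[\cdot]$ into the SR delay $l_i b_1$ (at which point the relay can form $R_i[\cdot]$ via (\ref{eq10})) plus the subsequent RD delay $d_i$ of the $i$-th coordinate of the RD input. Causality of the relay map first requires $S_{k+1-i}[\cdot]$ to be decoded at the relay within delay $l_i b_1$. Matching $j=k+1-i$ against the SR delay profile of Lemma \ref{SR}, one verifies that for $i\in[1,q]$ the index $j$ lands in $[pb_2+1,pb_2+q]$ whose SR delay is $(p+1)b_1$, while for $i$ in the band $[(l'-1)b_2+q+1,\,l'b_2+q]$ with $l'\in[p]$, $j$ lands in $[(p-l')b_2+1,(p-l'+1)b_2]$ whose SR delay is $(p-l'+1)b_1$. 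A direct calculation confirms that $l_i=\lceil (k+1-i)/b_2 \rceil$ equals $p+1$ in the first case and $p-l'+1$ in the second, matching the SR delay divided by $b_1$ and validating causality.

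Combining with $d_i$ from Lemma \ref{RD}, the end-to-end delay is $l_ib_1+d_i$. For $i\in[1,q]$ this gives $(p+1)b_1+b_2$, which is at most $T$ exactly under the standing hypothesis $\frac{T-b_2}{b_1}\geq p+1$. For $i$ in the $l'$-th band it gives $(p-l'+1)b_1+l'b_2+q$, which upon substituting $q=T-b_1-pb_2$ collapses to $T-(p-l')(b_2-b_1)\leq T$, with equality only at $l'=p$. The main obstacle is not conceptual but bookkeeping: partitioning $[k]$ consistently with both delay profiles, checking that the reversal $i\mapsto k+1-i$ pairs each SR delay with the matching RD delay, and arranging the arithmetic so that the tight case $l'=p$ saturates $T$ while the remaining cases are controlled by $b_1<b_2$ together with the hypothesis $\frac{T-b_2}{b_1}\geq p+1$.
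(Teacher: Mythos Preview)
Your proposal is correct and follows essentially the same three-step structure as the paper: verify causality of the relay map against the SR delay profile of Lemma~\ref{SR}, sum the SR and RD delays bandwise using Lemma~\ref{RD}, and confirm the rate meets~(\ref{eq-bound}). Your arithmetic is in fact slightly tighter than the paper's, since you correctly invoke the standing hypothesis $\tfrac{T-b_2}{b_1}\ge p+1$ to bound $(p+1)b_1+b_2\le T$ and you note the equality case at $l'=p$, whereas the paper writes strict inequalities in both places.
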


\begin{proof}
  We first show the relay map in (\ref{eq10}) is well defined which means the map value of $R_i[t]$ is computable from the received packets of the SR code by time $t$. Specifically, when $1\leq i\leq q$, $k+1-i\in[pb_2+1,pb_2+q]$. By Lemma \ref{SR}, the delay of $S_{k+1-i}[t]$ is $(p+1)b_1$, so $R_i[t]$ is computable. Similarly, one can check $R_i[t]$ for each case $\!(u\!-\!1)b_2\!+\!q<i\leq ub_2\!+\!q$, where $u\in[p]$.
  
  Next, we show the TBSC satisfies the delay $T$ which means $S_i[t]$ can be recovered at the destination by time $t+T$ for all $i\in[T-b_1]$. Due to the decode-and-forward relay map, the delay of $S_i[t]$ equals the sum of the  delay by the SR code and the delay by the RD code. Note that the relay map reverses the order of the coordinates. Combining with the delay profiles displayed in Lemma \ref{SR} and Lemma \ref{RD}, we have the following
  \begin{itemize}
    \item For $1\leq i\leq q$, the delay of $S_i[t]$ is $b_2+(p+1)b_1<pb_2+q+b_1=T$ because $b_1<b_2$ as we assumed.
    
    \item For $\!(u\!-\!1)b_2\!+\!q<i\leq ub_2\!+\!q$, where $1\leq u\leq p$, the delay of $S_i[t]$ is 
    $$ub_2+q+(p-u+1)b_1<pb_2+q+b_1=T\;.$$
  \end{itemize}
  Therefore, the delay $T$ is guaranteed at all symbols. 
  
  Finally, since the SR code can correct any $b_1$ burst erasures and the RD code can correct any $b_2$ burst erasures, and both codes share the same rate $\frac{T-b_1}{T-b_1+b_2}$, the resulting TBSC can correct any $(b_1,b_2)$ burst erasures and also has rate $\frac{T-b_1}{T-b_1+b_2}$.
  \end{proof}
  
  \begin{theorem}\label{thm5}
When $b_2<b_1$, under the constraint $\frac{T-b_1}{b_2} \geq \ceil{\frac{T-b_2}{b_1}}$, one can construct a rate-optimal binary $(b_1,b_2,T)$ TBSC.
\end{theorem}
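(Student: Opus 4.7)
The plan is to mirror the construction of Theorem~\ref{thm4} by interchanging the roles of $b_1$ and $b_2$, which swaps the recipes used for the SR and RD codes. Write $T-b_2 = p'b_1 + q'$ with $0<q'\leq b_1$, so the hypothesis $\frac{T-b_1}{b_2}\geq\ceil{\frac{T-b_2}{b_1}}$ becomes $(p'+1)b_2 \leq T-b_1$. The target rate is $\frac{T-b_2}{T-b_2+b_1}$, which yields a common message length $k=T-b_2$ and $b_1$ parity bits per packet in both component codes.

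For the SR code I would take a binary $(b_1,T-b_2)$ SC whose parity matrices $P_i$ of size $(T-b_2)\times b_1$ are obtained from the RD recipe (I)--(IV) of Section~III-B by interchanging $b_1$ and $b_2$ and using $(p',q')$ in place of $(p,q)$. Symmetrically, the RD code is a binary $(b_2,T-b_1)$ SC whose matrices $P'_i$ of the same shape are obtained from the SR recipe (i)--(iii) of Section~III-A under the same interchange. The hypothesis $(p'+1)b_2 \leq T-b_1$ ensures that the nonzero block $P'_{(p'+1)b_2}$ of the RD code still lies inside the admissible delay window $[0,T-b_1]$.

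Reapplying the arguments of Lemmas~\ref{SR}, \ref{invertible}, and \ref{RD} to this swapped construction produces SR and RD delay profiles obtained from those of Lemmas~\ref{SR} and \ref{RD} by the same interchange: the SR profile ranges from $b_1$ up to $p'b_1+q'=T-b_2$, while the RD profile ranges from $b_2$ up to $(p'+1)b_2$. The invertibility step of Lemma~\ref{invertible} transfers verbatim, because its proof only uses the block-permutation structure inside $\mathbf{P}'_0$ and is indifferent to which of the two burst lengths is the smaller one. I would then mirror the relay map~(\ref{eq10}) by setting $R_i[t]=S_{k+1-i}[t-\delta_i]$, where $\delta_i$ is the SR delay of coordinate $k+1-i$ in the above profile; this guarantees that the map is computable at time $t$ and reverses the coordinate order.

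The remaining step, and the only one that does not reduce to a formal exchange of symbols, is the delay bookkeeping at the destination. Summing the SR delay of $S_j$ with the RD delay of its reversed partner $R_{k+1-j}$ splits the source coordinates into two families: for $j\in[1,q']$ the total equals $b_1+(p'+1)b_2$, and for $j$ in the $u'$-th subsequent block ($u'\in[p']$) the total equals $u'b_1+q'+(p'-u'+1)b_2$. The second family is bounded by $p'b_1+q'+b_2=T$ using only $u'\leq p'$ and $b_2<b_1$; the main obstacle is the first family, where $b_1+(p'+1)b_2\leq T$ holds precisely because of the standing hypothesis $(p'+1)b_2\leq T-b_1$. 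This is exactly the parameter regime that the hypothesis is designed to secure, and once cleared, the composite code is a binary rate-optimal $(b_1,b_2,T)$ TBSC.
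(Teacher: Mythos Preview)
Your proposal is correct and follows essentially the same approach as the paper: swap the recipes of Sections~III-A and~III-B between the SR and RD codes, reverse the coordinate order in the decode-and-forward relay, and verify the summed delay stays within $T$. You supply more detail than the paper does (in particular, you correctly identify that the first-family bound $b_1+(p'+1)b_2\leq T$ is exactly where the hypothesis $\frac{T-b_1}{b_2}\geq\lceil\frac{T-b_2}{b_1}\rceil$ is consumed, whereas the paper's proof just writes ``one can check''), but the underlying construction and argument are identical.
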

\begin{proof}
Suppose $T-b_2=p'b_1+q'$ where $0< q'\leq b_1$. Then the SR code is a $(b_1,T-b_2)$ SC with rate $\frac{T-b_2}{T-b_2+b_1}$ as the one designed in Sec III-B, satisfying the delay profile  
\begin{IEEEeqnarray*}{rCl}
	\IEEEeqnarraymulticol{3}{l}{
		\begin{array}{c}
			\Bigl(
			\underbrace{b_1,\ldots,b_1}_{q'},\,
			\underbrace{b_1\!+\!q',\ldots,b_1\!+\!q'}_{b_1},\, \underbrace{2b_1\!+\!q',\ldots,2b_1\!+\!q'}_{b_1},\\  
			\ldots,\,
			\underbrace{p' b_1+q',\ldots,p' b_1+q'}_{b_1} 
			\Bigr)\;,
		\end{array}
	}\IEEEeqnarraynumspace 
\end{IEEEeqnarray*}
and the RD code is a $(b_2,T-b_1)$ with rate $\frac{T-b_2}{T-b_2+b_1}$ SC as the one designed in Sec III-A, satisfying the delay profile  
\begin{IEEEeqnarray*}{rCl}
\IEEEeqnarraymulticol{3}{l}{
        \begin{array}{c}
            \Bigl( 
            \underbrace{b_2,\ldots,b_2}_{b_1},
            \underbrace{2b_2,\ldots,2b_2}_{b_1},\,
            \ldots,\,
            \underbrace{p' b_2,\ldots,p' b_2}_{b_1}, \\  
            \underbrace{(p' + 1) b_2,\ldots,(p' + 1) b_2}_{q'} 
            \Bigr).
        \end{array}
    }
    \IEEEeqnarraynumspace
\end{IEEEeqnarray*}
The relay map is still the decode-and-forward map with a reverse order of the coordinates. Then one can check it is the required TBSC.
\end{proof}
	
\begin{remark}
  Combining Theorem \ref{thm4} and Theorem \ref{thm5}, we have constructed binary rate-optimal $(b_1,b_2,T)$ TBSCs under the constraint $\frac{T - \max\{b_1, b_2\}}{\min\{b_1, b_2\}} \geq \ceil{\frac{T - \min\{b_1, b_2\}}{\max\{b_1, b_2\}} }$. A sufficient condition which implies the constraint is that
  $T-b_1-b_2\geq \frac{b_1b_2}{|b_1-b_2|}$. Comparing with the constraint $\max\{b_1,b_2\}\mid T-b_1-b_2$ required in \cite{9834645}, our construction allows a more flexible $T$ when $b_1\neq b_2$. Take the example below where $b_1=2, b_2=3$. Then we can construct rate-optimal TBSCs for $T=5$ or $T\geq 7$ \footnote{It is computed from the more strict constraint $\frac{T - \max\{b_1, b_2\}}{\min\{b_1, b_2\}} \geq \ceil{\frac{T - \min\{b_1, b_2\}}{\max\{b_1, b_2\}}}$.}, while \cite{9834645} provides rate-optimal TBSCs for $3\mid (T-5)$, i.e., $T\in\{5,8,11,14,...\}$. Note that for fixed $b_1,b_2$, increasing 
$T$ allows a higher code rate. Thus, it is meaningful to consider TBSCs with a larger delay.
\end{remark}

	\section{Example}
	Set $\mathbf{b}_1 = 2$, $\mathbf{b}_2 = 3$, and $\mathbf{T} = 7$.  
	Then the optimal code rate is
	$R = \frac{T - b_1}{T - b_1 + b_2} = \frac{5}{8}$.  
	
    According to Section III-A, it has
	\[
	\mathbf{P}_{\mbox{\tiny SR}} = \begin{pmatrix}
		P_2 & P_3 & P_4 & \mathbf{0} \\
		P_1 & P_2 & P_3 & P_4 
	\end{pmatrix}
	\] 
	where
	\[
	P_2 = \begin{pmatrix}
		I_3 \\
		\mathbf{0}_{2}
	\end{pmatrix}, \quad
	P_4=\begin{pmatrix}
		\multicolumn{2}{c}{\mathbf{0}_{3}} \\
		I_2 & \mathbf{0}_{2}
	\end{pmatrix}\;,
	\]
	and $P_1=P_3=\mathbf{0}$. Then the SR code satisfies the delay profile  $(2,2,2,4,4)$.
	
	According to Section III-B, it has
	\[
	\mathbf{P}_{\mbox{\tiny RD}} = \begin{pmatrix}
		P'_3 & P'_4 & P'_5 & \mathbf{0} & \mathbf{0} \\
		P'_2 & P'_3 & P'_4 & P'_5 & \mathbf{0}  \\
		P'_1 & P'_2 & P'_3 & P'_4 & P'_5 
	\end{pmatrix}
	\]
	where 
	\[
	P'_1=\begin{pmatrix} 
       0&0&1\\
		\multicolumn{3}{c}{\mathbf{0}_{4}}
	\end{pmatrix},
	P'_2=\begin{pmatrix} 
       0&0&0\\0&0&1\\
		\multicolumn{3}{c}{\mathbf{0}_{3}}
	\end{pmatrix},
	P'_3 =
	\begin{pmatrix}  
		I_2 & \mathbf{0}_{2}\\
		\multicolumn{2}{c}{\mathbf{0}_{3}}
	\end{pmatrix},\]	
$P'_5 = \begin{pmatrix}
		\mathbf{ 0}_{2} \\
		I_3 
	\end{pmatrix}$ and $P'_4=\mathbf{0}$. The
	RD code satisfies the  delay profile $(3,3,5,5,5)$. Using a decode-and-forward relay with reversed coordinate ordering, all coordinate experience a delay of $7$.
	
	\bibliographystyle{IEEEtran}
	\bibliography{references}
\end{document}